\newtheorem{theorem}{Theorem}[section]
\newtheorem{corollary}[theorem]{Corollary}
\newtheorem*{problem}{Problem L}
\renewcommand{\baselinestretch}{1.07}
\title{A Las Vegas algorithm to solve the elliptic curve discrete logarithm problem}
\author{Ayan Mahalanobis\thanks{ayan.mahalanobis@gmail.com} \and Vivek Mallick\thanks{vmallick@iiserpune.ac.in}}
\newcommand{\Address}{{
\bigskip
\footnotesize
\hrulefill\;
\textsc{IISER Pune, Pashan, Pune, INDIA}\par\nopagebreak
}}
\date{\today}
\begin{document}
\maketitle
\begin{abstract}
In this paper, we describe a new Las Vegas algorithm to solve the elliptic curve discrete logarithm problem. The algorithm depends on a property of the the group of rational points of an elliptic curve and is thus not a generic algorithm. The algorithm that we describe has some similarities with the most powerful index-calculus algorithm for the discrete logarithm problem over a finite field.
\end{abstract}
\section{Introduction}
Public-key cryptography is a backbone of this modern society. Many of the public-key cryptosystems depend on the \emph{discrete logarithm problem} as their cryptographic primitive. Of all the groups used in a discrete logarithm based protocol, the group of \emph{rational points of an elliptic curve} is the most popular. In this paper, we describe a \textbf{Las Vegas algorithm} to solve the elliptic curve discrete logarithm problem.

There are two kinds of attack on the discrete logarithm problem. One is generic. This kind of attack works in any group. Examples of such attacks are the baby-step giant-step attack~\cite[Proposition 2.22]{silverman} and Pollard's rho~\cite[Section 4.5]{silverman}. The other kind of attack depends on the group used. Example of such attack is the index-calculus attack~\cite[Section 3.8]{silverman} on the multiplicative group of a finite field. An attack similar to index calculus for elliptic curves, known as xedni calculus, was developed by Silverman~\cite{xedni1,xedni2}. However, it was found to be no better than exhaustive search. Another simailar work in the direction of ours is Semaev~\cite{semaev}. 

In this paper, we describe an attack which is particular to the elliptic curves. The attack is a Las Vegas algorithm. The attack uses a theorem for elliptic curve. The idea behind the attack is completely new and is of a completely different genre from the existing ones~\cite{gaudry,lange,gailbraith1,gailbraith2}. In comparison to xedni calculus, our algorithm is fairly straightforward to understand, implement and is better than the exhaustive search. 

The main algorithm is divided into two algorithms. The first one reduces the elliptic curve discrete logarithm problem to a problem in linear algebra. We call the linear algebra problem, Problem L. This reduction is a Las Vegas algorithm with \textbf{probability of success} $0.6$ and is \textbf{polynomial} in both time and space complexity. The second half of the algorithm is solving Problem L. This is the current bottle-neck of the whole algorithm and better algorithms to solve Problem L will produce better algorithms to solve elliptic curve discrete logarithm problem. The success of the main algorithm is $0.6\times\left(\log{p}\right)^2/p$ where every pass is polynomial time in time and space complexity.
	\subsection{The central idea behind our attack}
Let $G$ be a cyclic group of prime order $p$. Let $P$ be a non-identity element and $Q(=mP)$ belong to $G$. The \emph{discrete logarithm problem} is to compute the $m$. One way to find $m$ is to find integers $n_i$, for $i=1,2,\ldots,k$ for some positive integer $k$ and  $1\leq n_i <p$ such that $\sum_{i=1}^k n_i=m\bmod{p}$. The last equality is hard to compute because we do not know $m$. However we can decide whether 
\begin{equation}\label{eqn1}
\sum_{i=1}^k n_iP=Q
\end{equation} 
and based on that we can decide if $\sum_{i=1}^k n_i=m\bmod{p}$. Once the equality  holds, we have found $m$ and the discrete logarithm problem is solved.

The number of possible choices of $n_i$ for a given $k$ that can solve the discrete logarithm problem is the number of partitions of $m$ into $k$ parts modulo a prime $p$. The applicability of the above method depends on, how fast can one decide on the equality in the above equation and on the probability, how likely is it that a given set of positive integers $n_i$ sums to $m\bmod{p}$? 

An obvious question is raised, can one choose a set of $n_i$ in such a way that the probability of an equality is higher than the random selection? In the next section, we find a way to check for equality in the case of elliptic curves, however our choice of $n_i$ is uniformly random. Then the algorithm is somewhat straightforward, fix a $k$, choose $n_i$ uniformly random and then check for equality. Once there is a set of $n_i$ for which the equality is found, we have solved the discrete logarithm problem.
\section{The elliptic curve discrete logarithm problem}
The elliptic curve discrete logarithm problem (ECDLP) is the heart and soul of modern public-key cryptography. This paper is about a new probabilistic algorithm to solve this problem. Our algorithm is a fairly straightforward application of the Riemann-Roch theorem. We denote by $\mathcal{E}(\mathbb{F}_q)$ the group of rational points of the elliptic curve $\mathcal{E}$ over $\mathbb{F}_q$. It is well known that there is an isomorphism $\mathcal{E}(\mathbb{F}_q)\rightarrow \text{Pic}^0(\mathcal{E})$ given by $P\mapsto [P]-[\mathcal{O}]$~\cite[Proposition 4.10]{milne}.
\begin{theorem}\label{thm1}
Let $\mathcal{E}$ be an elliptic curve over $\mathbb{F}_q$ and $P_1,P_2,\ldots,P_k$ be points on that curve, where $k=3n^\prime$ for some positive integer $n^\prime$. Then $\sum_{i=1}^kP_i=\mathcal{O}$ if and only if there is a curve $\mathcal{C}$ of degree $n^\prime$ that passes through these points. Multiplicities are intersection multiplicities.
\end{theorem}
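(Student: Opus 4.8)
The plan is to translate the group-theoretic condition $\sum_{i=1}^k P_i = \mathcal{O}$ into a statement about linear equivalence of divisors, and then to realize effective divisors in the relevant linear system as intersection divisors of plane curves via the Riemann--Roch theorem. First I would use the stated isomorphism $\mathcal{E}(\mathbb{F}_q) \to \text{Pic}^0(\mathcal{E})$, $P \mapsto [P] - [\mathcal{O}]$. Writing $D = \sum_{i=1}^k [P_i]$ for the effective divisor of degree $k = 3n'$ attached to the points, the condition $\sum_i P_i = \mathcal{O}$ is equivalent to $\sum_i ([P_i] - [\mathcal{O}]) = 0$ in $\text{Pic}^0(\mathcal{E})$, i.e. to the linear equivalence $D \sim 3n'[\mathcal{O}]$. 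So the theorem reduces to the assertion that an effective divisor $D$ of degree $3n'$ is the intersection divisor of some degree-$n'$ plane curve with $\mathcal{E}$ if and only if $D \sim 3n'[\mathcal{O}]$.

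Next I would fix the dictionary between plane curves and divisor classes. Taking $\mathcal{E}$ in Weierstrass form with $\mathcal{O} = [0:1:0]$, the line at infinity meets $\mathcal{E}$ only at $\mathcal{O}$, with multiplicity $3$, so the hyperplane class $H$ restricts to $3[\mathcal{O}]$ on $\mathcal{E}$; by Bezout a curve $\mathcal{C}$ of degree $n'$ cuts out on $\mathcal{E}$ an effective divisor $\mathcal{C} \cdot \mathcal{E}$ of degree $3n'$, linearly equivalent to $n'H \sim 3n'[\mathcal{O}]$. This immediately gives the easy direction: if such a $\mathcal{C}$ passes through all the $P_i$ with the prescribed multiplicities, then $D \leq \mathcal{C} \cdot \mathcal{E}$, and since both divisors are effective of the same degree $3n'$ we get $D = \mathcal{C} \cdot \mathcal{E} \sim 3n'[\mathcal{O}]$, whence $\sum_i P_i = \mathcal{O}$.

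For the converse I would argue that every effective divisor $D \sim n'H$ is cut out by an honest plane curve of degree $n'$. Since $D$ is effective and linearly equivalent to $n'H$, it is the zero divisor of some global section $s$ of the line bundle $\mathcal{O}_{\mathcal{E}}(n'H)$, and the point is to lift $s$ to a homogeneous form of degree $n'$ on $\mathbb{P}^2$. I would use the exact sequence
\[
0 \longrightarrow \mathcal{O}_{\mathbb{P}^2}(n'-3) \xrightarrow{\ \cdot F\ } \mathcal{O}_{\mathbb{P}^2}(n') \longrightarrow \mathcal{O}_{\mathcal{E}}(n'H) \longrightarrow 0,
\]
where $F$ is the cubic defining $\mathcal{E}$, and pass to cohomology; because $H^1(\mathbb{P}^2, \mathcal{O}_{\mathbb{P}^2}(m)) = 0$ for every $m$, the restriction map $H^0(\mathbb{P}^2, \mathcal{O}_{\mathbb{P}^2}(n')) \to H^0(\mathcal{E}, \mathcal{O}_{\mathcal{E}}(n'H))$ is surjective. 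Hence $s$ lifts to a degree-$n'$ form whose zero locus is a plane curve $\mathcal{C}$ with $\mathcal{C} \cdot \mathcal{E} = \text{div}(s) = D$, so that $\mathcal{C}$ passes through each $P_i$ with exactly the intersection multiplicity recorded in $D$.

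The Riemann--Roch theorem enters to guarantee the relevant linear systems have the expected dimension: for $D$ of degree $3n' > 0 = 2g-2$ on the genus-one curve $\mathcal{E}$ one has $\ell(D) = 3n'$, so that $D$ is nonempty in its class and the counting is consistent. The main obstacle I anticipate is the surjectivity of the restriction map --- equivalently, that $\mathcal{E}$ is projectively normal so that every section of $\mathcal{O}_{\mathcal{E}}(n'H)$ extends to a global form; this is exactly the step that forces the abstract section $s$ to come from a genuine degree-$n'$ plane curve, and it is where the vanishing $H^1(\mathbb{P}^2, \mathcal{O}_{\mathbb{P}^2}(n'-3)) = 0$ does the real work. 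A secondary point needing care is matching the divisorial multiplicities in $D$ with the geometric intersection multiplicities of $\mathcal{C}$ and $\mathcal{E}$, which is handled cleanly by defining $\mathcal{C}\cdot\mathcal{E}$ as the zero divisor of the restricted section.
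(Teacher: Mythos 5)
Your proof is correct, and on the crucial direction it takes a genuinely different route from the paper. Both arguments start the same way, translating $\sum_i P_i=\mathcal{O}$ into the linear equivalence $\sum_i[P_i]\sim 3n^\prime[\mathcal{O}]$ via the isomorphism $\mathcal{E}(\mathbb{F}_q)\to\mathrm{Pic}^0(\mathcal{E})$, and your ``easy'' direction (Bezout plus the observation that the hyperplane class restricts to $3[\mathcal{O}]$) is essentially the paper's argument with the rational function $\mathcal{C}/z^{n^\prime}$, just phrased in divisor-class language. The divergence is in producing the curve from the linear equivalence: the paper first finds a rational function $\phi/z^{n^\prime}$ over $\bar{\mathbb{F}}_q$ whose divisor is $\sum_i[P_i]-3n^\prime[\mathcal{O}]$ (implicitly assuming every function with poles only at $\mathcal{O}$ of order $3n^\prime$ has this shape), and then descends to $\mathbb{F}_q$ by Galois averaging $\psi=\sum_{\sigma\in G}\phi^\sigma$; you instead work over $\mathbb{F}_q$ from the start and lift the section $s$ of $\mathcal{O}_{\mathcal{E}}(n^\prime H)$ to a degree-$n^\prime$ form using the sequence $0\to\mathcal{O}_{\mathbb{P}^2}(n^\prime-3)\to\mathcal{O}_{\mathbb{P}^2}(n^\prime)\to\mathcal{O}_{\mathcal{E}}(n^\prime H)\to 0$ and the vanishing of $H^1(\mathbb{P}^2,\mathcal{O}_{\mathbb{P}^2}(n^\prime-3))$. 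Your route costs sheaf cohomology but buys two real things: it rigorously justifies the lifting step that the paper only asserts (that a section of the line bundle genuinely comes from a plane curve --- projective normality of $\mathcal{E}$), and it sidesteps the descent step entirely, which in the paper has a gap: the Galois average $\sum_\sigma\phi^\sigma$ can vanish identically (every coefficient of $\psi$ is a trace, and all traces can be zero), a defect that needs an extra scaling argument to repair. The paper's approach, in exchange, is more elementary and self-contained for readers without cohomological background.
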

\begin{proof}
Assume that $\sum_{i=1}^kP_i=\mathcal{O}$ in $\mathbb{F}_q$ and then it is such in the algebraic closure $\mathbb{\bar{F}}_q$. From the above isomorphism, $\sum_{i=1}^kP_i\mapsto\sum_{i=1}^k[P_i]-k[\mathcal{O}]$. Then $\sum_{i=1}^k[P_i]-k[\mathcal{O}]$ is zero in the Picard group $\text{Pic}^0_{\mathbb{\bar{F}}_q}(\mathcal{E})$. Then there is a rational function $\dfrac{\phi}{z^{n^\prime}}$ over $\mathbb{\bar{F}}_q$ such that 
\begin{equation}
\sum_{i=1}^k[P_i]-k[\mathcal{O}]=\text{div}\left(\dfrac{\phi}{z^{n^\prime}}\right)
\end{equation}
Bezout's theorem justifies that $\text{deg}(\phi)=n^\prime$, since $\phi$ is zero on $P_1,P_2,\ldots,P_k$. We now claim, there is $\psi$ over $\mathbb{F}_q$ which is also of degree $n^\prime$ and  passes through $P_1,P_2,\ldots,P_k$. First thing to note is that there is a finite extension of $\mathbb{F}_q$, $\mathbb{F}_{q^N}$(say) in which all the coefficients of $\phi$ lies and $\gcd(q,N)=1$. Let $G$ be the Galois group of $\mathbb{F}_{q^N}$ over $\mathbb{F}_q$ and define
\begin{equation}
\psi=\sum\limits_{\sigma\in\mathcal{G}}\phi^\sigma.
\end{equation}
Clearly $\text{deg}(\psi)=n^\prime$. Note that, since $P_i$ for $i=1,2,\ldots,k$ is in $\mathbb{F}_q$ is invariant under $\sigma$. Furthermore, $\sigma$ being a field automorphism, $P_i$ is a zero of $\phi^\sigma$ for all $\sigma \in G$. This proves that $P_i$ are zeros of $\psi$ and then Bezout's theorem shows that these are the all possible zeros of $\psi$ on $\mathcal{E}$. The only thing left to show is that $\psi$ is over $\mathbb{F}_q$. To see that, lets write $\phi=\sum_{i+j+k=n^\prime}a_{ijk}x^iy^jz^k$. Then $\psi=\sum_{i+j+k=n^\prime}\sum_{\sigma\in G}a_{ijk}^\sigma x^iy^jz^k$. However, it is well known that $\sum_{\sigma\in G}a^\sigma\in\mathbb{F}_q$ for all $a\in\mathbb{F}_{q^N}$.

Conversely, if we are given a curve $\mathcal{C}$ of degree $n^\prime$ that passes through $P_1,P_2,\ldots,P_k$. Then consider the rational function $\mathcal{C}/z^{n^\prime}$. Then this function has zeros on $P_i$, $i=1,2,\ldots,k$ and poles of order $k$ at $\mathcal{O}$. The above isomorphism says $\sum_{i=1}^kP_i=\mathcal{O}$. 
\end{proof}

\subsection{How to use the above theorem in our algorithm}
We choose $k$ such that $k=3n^\prime$ for some positive integer $n^\prime$. Then we choose random points $P_1,P_2,\ldots,P_s$ and $Q_1,Q_2,\ldots,Q_t$ such that $s+t=k$ from $\mathcal{E}$ and check if there is a homogeneous curve of degree $n^\prime$ that passes through these points. Where $P_i=n_iP$ and $Q_j=-n_j^\prime Q$ for some integers $n_i$ and $n_j^\prime$. If there is a curve, the discrete logarithm problem is solved. Otherwise repeat the process by choosing a new set of points $P_1,P_2,\ldots,P_s$ and $Q_1,Q_2,\ldots,Q_t$. To choose these points $P_i$ and $Q_j$, we choose a random point $n_i,n_j^\prime$ and compute $n_iP$ and $-n_j^\prime Q$. We would choose $n_i$ and $n_j^\prime$ to be distinct from the ones chosen before. This gives rise to distinct points $P_i$ and $Q_j$ on $\mathcal{E}$. 

The only question remains, how do we say if there is a homogeneous curve of degree $n^\prime$ passing through these selected points? One can answer this question using linear algebra.

Let $C=\sum_{i+j+k=n^\prime}a_{ijk}x^iy^jz^k$ be a \emph{complete} homogeneous curve of degree $n^\prime$. We assume that an ordering of $i,j,k$ is fixed throughout this paper and $C$ is presented according to that ordering.
By complete we mean that the curve has all the possible monomials of degree $n^\prime$. We need to check if $P_i$, $i=1,2,\ldots,s$ and $Q_j$ for $j=1,2,\ldots,t$ satisfy the curve $C$. Note that, there is no need to compute the values of $a_{ijk}$, just mere existence will solve the discrete logarithm problem.

Let $P$ be a point on $\mathcal{E}$. We denote by $\overline{P}$ the value of $C$ when the values of $x,y,z$ in $P$ is substituted in $C$. In other words, $\overline{P}$ is a linear combination of $a_{ijk}$ with the fixed ordering. Similarly for $Q$s. We now form a matrix $\mathcal{M}$ where the rows of $\mathcal{M}$ are $\overline{P_i}$ for $i=1,2,\ldots,s$ and $\overline{Q_j}$ for $j=1,2,\ldots,t$. If this matrix has a non-zero left-kernel, we have solved the discrete logarithm problem. By \emph{left-kernel} we mean the kernel of $\mathcal{M}^\text{T}$, the transpose of $\mathcal{M}$.

\subsection{Why look at the left-kernel instead of the kernel} In this paper, we will use the left-kernel more often than the (right)kernel of $\mathcal{M}$. We denote the left-kernel by $\mathcal{K}$ and kernel by $\mathcal{K}^\prime$. We first prove the following theorem:
\begin{theorem}
The following are equivalent:
\begin{description}
\item[(a)] $\mathcal{K}=0$.
\item[(b)] $\mathcal{K}^\prime$ only contain curves that are a multiple of $\mathcal{E}$.
\end{description}
\end{theorem}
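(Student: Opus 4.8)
The plan is to reduce both conditions to a single statement about $r=\text{rank}(\mathcal{M})$ and then show they force the same rank. Let $N=\binom{n^\prime+2}{2}$ be the number of degree-$n^\prime$ monomials in $x,y,z$, so that $\mathcal{M}$ is a $k\times N$ matrix. Applying rank--nullity to $\mathcal{M}$ and to $\mathcal{M}^{\text{T}}$ gives
\begin{equation}
\dim\mathcal{K}^\prime=N-r,\qquad \dim\mathcal{K}=k-r.
\end{equation}
Hence condition (a) is literally the assertion $r=k$, and the whole problem becomes: which rank does (b) correspond to? My first step is therefore to pin down $\dim\mathcal{K}^\prime$ in the situation described by (b).

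To that end I would recall that $\mathcal{E}$ is cut out by a single homogeneous cubic $F(x,y,z)$, and that a degree-$n^\prime$ curve is a multiple of $\mathcal{E}$ precisely when its defining form equals $F\cdot D$ for some form $D$ of degree $n^\prime-3$. Let $\mathcal{V}$ denote the space of such multiples. Since the polynomial ring is an integral domain, multiplication by $F$ is injective, so $\dim\mathcal{V}=\binom{n^\prime-1}{2}=:M$ (which is $0$ when $n^\prime<3$). Because the chosen points $P_i,Q_j$ all lie on $\mathcal{E}$, every element of $\mathcal{V}$ vanishes at all of them, which gives the inclusion $\mathcal{V}\subseteq\mathcal{K}^\prime$ unconditionally. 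Consequently statement (b), namely $\mathcal{K}^\prime\subseteq\mathcal{V}$, is equivalent to the equality $\mathcal{K}^\prime=\mathcal{V}$, and hence to the purely dimensional statement $\dim\mathcal{K}^\prime=M$.

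The crux is then a numerical identity. Using $k=3n^\prime$ one computes
\begin{equation}
N-M=\binom{n^\prime+2}{2}-\binom{n^\prime-1}{2}=3n^\prime=k.
\end{equation}
Given this, (b) becomes $N-r=M=N-k$, i.e. $r=k$, which is exactly (a); and the implications run both ways, since $r=k$ forces $\dim\mathcal{K}^\prime=M$ and the standing inclusion $\mathcal{V}\subseteq\mathcal{K}^\prime$ then upgrades to equality, yielding (b). I expect the only genuine content to be this dimension count together with the two facts it rests on: that multiplication by $F$ is injective, so that $\dim\mathcal{V}$ is \emph{exactly} $M$ rather than merely at most $M$, and that the points actually lie on $\mathcal{E}$, so that $\mathcal{V}\subseteq\mathcal{K}^\prime$. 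The identity $N-M=k$ is the structural reason the normalization $k=3n^\prime$ was imposed in the first place, and it is precisely what makes the left-kernel $\mathcal{K}$ carry the same information as the right-kernel $\mathcal{K}^\prime$.
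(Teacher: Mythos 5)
Your proof is correct and takes essentially the same route as the paper's: both arguments rest on the unconditional inclusion of the multiples of $\mathcal{E}$ into $\mathcal{K}^\prime$, the dimension count $\dim\mathcal{V}=\binom{n^\prime-1}{2}$, and the rank--nullity computation combined with the identity $\binom{n^\prime+2}{2}-\binom{n^\prime-1}{2}=3n^\prime$. If anything, your write-up is slightly tighter than the paper's: you justify that $\dim\mathcal{V}$ is \emph{exactly} $\binom{n^\prime-1}{2}$ via injectivity of multiplication by the cubic, and you funnel both implications through the single rank statement $r=k$, whereas the paper's converse direction is phrased with a dimension inequality that only yields the conclusion once the standing inclusion $\mathcal{V}\subseteq\mathcal{K}^\prime$ is used to turn it into an equality.
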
 
\begin{proof}
The proof uses a simple counting argument. First recall the well-known fact that the number of monomials of degree $d$ is $d+2\choose 2$. Furthermore, notice two things -- all multiples of $\mathcal{E}$ belongs to $\mathcal{K}^\prime$ and the dimension of that vector-space (multiples of $\mathcal{E}$) is ${n^\prime-1\choose 2} =\dfrac{(n^\prime-2)(n^\prime-1)}{2}$, where $n^\prime$ is as defined earlier.

Now, $\mathcal{M}$ was as defined earlier, has $3n^\prime$ rows and $\dfrac{(n^\prime+1)(n^\prime+2)}{2}$ columns. Then $\mathcal{K}=0$ means that the row-rank of $\mathcal{M}$ is $3n^\prime$. So the dimension of the $\mathcal{K}^\prime$ is 
\begin{equation*}
\dfrac{(n^\prime+1)(n^\prime+2)}{2}-3n^\prime=\dfrac{(n^\prime-2)(n^\prime-1)}{2}.
\end{equation*}
This proves $(a)$ implies $(b)$.

Conversely, if $\mathcal{K}^\prime$ contains all the curves that are a multiple of $\mathcal{E}$ then its dimension is at least $\dfrac{(n^\prime-2)(n^\prime-1)}{2}$, then the rank is $3n^\prime$, making $\mathcal{K}=0$. 
\end{proof}

It is easy to see, while working with the above theorem $\mathcal{M}$ cannot repeat any row. So from now onward we would assume that $\mathcal{M}$ has no repeating rows. For all practical purposes this means that we are working with distinct(unique) partitions.
 
A question that becomes significantly important later is, instead of choosing $k$ points from the elliptic curve what happens if we choose $k+l$ points for some positive integer $l$. The answer to the question lies in the following theorem.

\begin{theorem}
  If $l \geq 1$, the dimension of the left kernel of $\mathcal{M}$ is $l$.
\end{theorem}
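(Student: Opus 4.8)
The plan is to convert the claim about the left kernel into a claim about the rank of $\mathcal{M}$, and then pin that rank down with Bezout's theorem together with the dimension count for multiples of $\mathcal{E}$ that was established in the preceding theorem. The point of the hypothesis $l \geq 1$ will be that having \emph{strictly} more than $3n'$ distinct points forces any degree-$n'$ curve through all of them to contain $\mathcal{E}$.

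First I would record the bookkeeping. With $k + l = 3n' + l$ points, $\mathcal{M}$ has $3n' + l$ rows and $N := \binom{n'+2}{2} = \frac{(n'+1)(n'+2)}{2}$ columns. Applying rank-nullity to $\mathcal{M}^{\text{T}}$ gives $\dim\mathcal{K} = (3n'+l) - \operatorname{rank}(\mathcal{M})$, so it suffices to show $\operatorname{rank}(\mathcal{M}) = 3n'$. Applying rank-nullity to $\mathcal{M}$ itself gives $\operatorname{rank}(\mathcal{M}) = N - \dim\mathcal{K}'$, so the entire theorem reduces to showing $\dim\mathcal{K}' = N - 3n' = \frac{(n'-2)(n'-1)}{2}$; that is, that $\mathcal{K}'$ consists of \emph{exactly} the degree-$n'$ curves that are multiples of $\mathcal{E}$.

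The inclusion ``multiples of $\mathcal{E}\subseteq\mathcal{K}'$'' is immediate, since such curves vanish along all of $\mathcal{E}$ and hence at every chosen point; its dimension is $\frac{(n'-2)(n'-1)}{2}$ exactly as before (the map $D \mapsto \mathcal{E}\cdot D$ from degree-$(n'-3)$ curves is injective). The substance is the reverse inclusion, and here I would invoke Bezout's theorem, working over $\bar{\mathbb{F}}_q$ — legitimate because the rank of $\mathcal{M}$ is unchanged by field extension. Let $\mathcal{C}\in\mathcal{K}'$ be a degree-$n'$ curve through all $3n'+l$ chosen points, which are distinct (we have already assumed $\mathcal{M}$ has no repeated rows). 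Then $\mathcal{C}$ and $\mathcal{E}$ meet in at least $3n'+l$ distinct points. If they shared no common component, Bezout would force the total intersection multiplicity to equal $3\cdot n' = 3n'$, contradicting $3n'+l > 3n'$. Hence they share a component; as $\mathcal{E}$ is irreducible, $\mathcal{E}$ divides $\mathcal{C}$, so $\mathcal{C}$ is a multiple of $\mathcal{E}$.

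Combining the two inclusions yields $\mathcal{K}' = \{\text{multiples of }\mathcal{E}\}$, whence $\dim\mathcal{K}' = \frac{(n'-2)(n'-1)}{2}$, $\operatorname{rank}(\mathcal{M}) = 3n'$, and therefore $\dim\mathcal{K} = l$. The main obstacle is the reverse inclusion: making Bezout's bound bite requires that the $3n'+l$ points be genuinely distinct, so that each contributes at least one to the intersection count, and that we are entitled to argue over $\bar{\mathbb{F}}_q$. Once these are secured the degree comparison $3n'+l > 3n'$ does all the work, and it is worth noting that no general-position or genericity hypothesis is needed — the dimension of $\mathcal{K}$ equals $l$ for \emph{every} admissible choice of distinct points.
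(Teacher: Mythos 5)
Your proof is correct and follows essentially the same route as the paper's: identify the right kernel $\mathcal{K}^\prime$ with the multiples of $\mathcal{E}$ (irreducibility of $\mathcal{E}$ plus Bezout), compute its dimension as $\frac{(n^\prime-2)(n^\prime-1)}{2}$, and conclude by rank-nullity together with equality of row and column rank. If anything, your write-up is more careful than the paper's, since you make the Bezout step and the passage to $\bar{\mathbb{F}}_q$ explicit and your bookkeeping avoids the typographical slips in the paper's displayed rank-nullity equations.
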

\begin{proof}
  First assume $l \geq 1$. In this case, any non-trivial element of
  $\mathcal{K}^\prime$ will define a curve which passes through more than $3n^\prime$ point of the elliptic curve. Since the elliptic curve is irreducible, it must be a
  component of the curve. Thus the equation defining the curve must be
  divisible by the equation defining the elliptic curve. Thus, the dimension
  of $\mathcal{K}^\prime$ is the dimension of all degree $n^\prime$ homogeneous polynomials which
  are divisible by the elliptic curve. This is the same is the dimension of
  all degree $n^\prime - 3$ homogeneous polynomials. Thus, we get
  \begin{equation*}
	\text{dim}(\mathcal{K}^\prime) = \frac{(n^\prime-2)(n^\prime-1)}{2}.
  \end{equation*}
  On the other hand, by rank-nullity theorem, it follows:
  \begin{eqnarray*}
	&\text{dim}(\mathcal{K}^\prime) + \text{dim}(\text{image}(\mathcal{M})) = \frac{(n^\prime-2)(n^\prime-1)}{2} \\  
	&\text{dim}(\mathcal{K}) + \text{dim}(\text{image}({\mathcal{M}}^\text{T})) = 3n^\prime + l.
  \end{eqnarray*}
  Thus, since row rank and the column rank of a matrix are equal,
  \begin{equation*}
	\text{dim}(\mathcal{K}) = 3n^\prime +l-\frac{(n^\prime-2)(n^\prime-1)}{2} + \text{dim}(\mathcal{K}^\prime) = l.
  \end{equation*}
\end{proof}
\begin{corollary}
Assume that $\mathcal{M}$ has $3n^\prime+l$ rows, computed from the same number of points of the elliptic curve $\mathcal{E}$. If there is a curve $\mathcal{C}$ intersecting $\mathcal{E}$ non-trivially in $3n^\prime$ points among $3n^\prime+l$ points, then there is  a vector $v$ in $\mathcal{K}$ with at least $l$ zeros. Conversely, if there is a vector $v$ in $\mathcal{K}$ with at least $l$ zeros, then there is a curve $\mathcal{C}$ passing through those $3n^\prime$ points that correspond to the non-zero entries of $v$ in $\mathcal{M}$. 
\end{corollary}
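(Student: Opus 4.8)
The plan is to translate everything into a statement about the rank of a submatrix of $\mathcal{M}$ and then invoke the dimension count already established for the multiples of $\mathcal{E}$. First I would fix a subset $S$ of the $3n^\prime+l$ rows with $|S|=3n^\prime$ and let $\mathcal{M}_S$ denote the corresponding $3n^\prime\times\frac{(n^\prime+1)(n^\prime+2)}{2}$ submatrix. The (right) kernel of $\mathcal{M}_S$ is exactly the space of degree-$n^\prime$ homogeneous curves through the $3n^\prime$ points indexed by $S$, and it always contains the multiples of $\mathcal{E}$, a subspace of dimension $\frac{(n^\prime-2)(n^\prime-1)}{2}$ by the counting already carried out. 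Since $\mathcal{M}_S$ has $\frac{(n^\prime+1)(n^\prime+2)}{2}$ columns and $\frac{(n^\prime+1)(n^\prime+2)}{2}-3n^\prime=\frac{(n^\prime-2)(n^\prime-1)}{2}$, rank-nullity yields the key dichotomy: the rows $\{\overline{P_i}:i\in S\}$ are linearly dependent if and only if $\dim(\ker\mathcal{M}_S)>\frac{(n^\prime-2)(n^\prime-1)}{2}$, which holds if and only if some degree-$n^\prime$ curve through the $S$-points is not a multiple of $\mathcal{E}$.

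The next step is the bookkeeping dictionary: a left-null vector of $\mathcal{M}_S$, extended by zeros on the complementary $l$ rows, is precisely an element of $\mathcal{K}$ whose zero set contains $S^c$, i.e. a vector of $\mathcal{K}$ with at least $l$ zeros and support inside $S$. With this in hand both directions follow. For the forward direction, given a curve $\mathcal{C}$ meeting $\mathcal{E}$ properly in the $3n^\prime$ points of $S$, its coefficient vector lies in $\ker\mathcal{M}_S$ but is not a multiple of $\mathcal{E}$; hence $\dim(\ker\mathcal{M}_S)\geq\frac{(n^\prime-2)(n^\prime-1)}{2}+1$, so $\mathrm{rank}(\mathcal{M}_S)\leq 3n^\prime-1$, the $S$-rows are dependent, and the resulting dependence extends to a $v\in\mathcal{K}$ with at least $l$ zeros.

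For the converse, given $v\in\mathcal{K}$ with exactly $l$ zeros, its support $S$ has size $3n^\prime$, and the relation $\sum_{i\in S}v_i\overline{P_i}=0$ is a non-trivial dependence among the $S$-rows; therefore $\mathrm{rank}(\mathcal{M}_S)\leq 3n^\prime-1$ and $\dim(\ker\mathcal{M}_S)\geq\frac{(n^\prime-2)(n^\prime-1)}{2}+1$. Since the multiples of $\mathcal{E}$ fill only $\frac{(n^\prime-2)(n^\prime-1)}{2}$ of those dimensions, there is a degree-$n^\prime$ curve $\mathcal{C}\in\ker\mathcal{M}_S$ that is not a multiple of $\mathcal{E}$; this $\mathcal{C}$ passes through the $3n^\prime$ points of $S$ and, by Theorem~\ref{thm1}, meets $\mathcal{E}$ in exactly those points.

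The step I expect to be the crux is making the phrase \emph{intersecting $\mathcal{E}$ non-trivially} do the right work in both directions: it is precisely the requirement that the witnessing curve not be a multiple of $\mathcal{E}$, so that its coefficient vector contributes a genuinely new dimension to $\ker\mathcal{M}_S$ beyond the $\frac{(n^\prime-2)(n^\prime-1)}{2}$-dimensional space of $\mathcal{E}$-multiples. Once that is pinned down, the equivalence between a rank drop of $\mathcal{M}_S$ and the existence of a proper curve is automatic, and the only remaining care is the bookkeeping between \emph{$v$ has at least $l$ zeros} and \emph{the support of $v$ has size $3n^\prime$}.
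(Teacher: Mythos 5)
Your proposal is correct and follows essentially the same route as the paper's proof: restrict to the $3n^\prime$-row submatrix on the support of $v$, use the zero-padding/row-deletion correspondence between left-null vectors of that submatrix and vectors of $\mathcal{K}$ with at least $l$ zeros, and detect a genuine curve via the dimension count $\frac{(n^\prime-2)(n^\prime-1)}{2}$ for multiples of $\mathcal{E}$. The only difference is presentational — the paper simply cites its earlier theorem (left kernel vanishes iff the right kernel contains only multiples of $\mathcal{E}$), whereas you re-derive that equivalence inline by rank-nullity, which is exactly how the paper proved that theorem in the first place.
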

\begin{proof}
Assume that there is a non-trivial curve $\mathcal{C}$ intersecting $\mathcal{E}$ in $3n^\prime$ points. Then construct the matrix $\mathcal{M}^\prime$ whose rows are the points of intersection. Then from the earlier theorem we see that $\mathcal{K}$ for this matrix $\mathcal{M}^\prime$ is non-zero. In all the vectors of $\mathcal{K}$ if we put zeros in the place where where we deleted rows then those are element of the left kernel of $\mathcal{M}$. It is clear that these vectors will have at least $l$ zeros.

Conversely, if there is a vector with at least $l$ zeros in $\mathcal{K}$, then by deleting $l$ zeros from the vector and corresponding rows from $\mathcal{M}$ we have the required result from the theorem above. 
\end{proof}
\subsection{Veronese embedding and our algorithm}
There is an alternate way of looking at our algorithm through Veronese embedding~\cite[Page 21: Example 2.4 ]{harris}. We present that in this section.

We know that the sum of $3n^\prime$ points $P_1, P_2, \dotsc, P_{3n^\prime}$ on an
  elliptic curve $\ec$, embedded in $\prpl$, is zero if and only if there
  exists a curve $C$ of degree $n^\prime$ in $\prpl$ such that the algebraic-geometric intersection $C \cap \ec$ is the set $\setl{P_1, \dotsc, P_{3n^\prime}}$, counting
  multiplicity. Given a collection of points $\mathcal{P} = \setl{P_1, P_2,
	\dotsc, P_{3n^\prime+l}}$ on the elliptic curve, we need to find some subset that
	has sum zero. To find this subset, we try to find a curve of degree $n^\prime$ which passes
	through $3n^\prime$ points of $\mathcal{P}$. This can be thought of in the following way
	in terms of the Veronese embedding.

  Recall that the Veronese embedding $\vembd : \prpl \to \prsp{D}$ where
  $D = \frac{(n^\prime + 1)(n^\prime + 2)}{2}$, is given by
$\vembd (x_0 : x_1 : x_2) = (z_1 : z_2 : \dotsb : z_D)$, where $z_i = x_1^{a^i_1} x_2^{a^i_2} x_3^{a^i_3}$ for some bijection
  \begin{align*}
	\Phi: \set{k \in \Z}{1 \leq k \leq D} &\rightarrow 
	\set{(n_1, n_2, n_3) \in \N^3}{n_1 + n_2 + n_3 = n^\prime}. \\
	i &\mapsto (a^i_1, a^i_2, a^i_3)
  \end{align*}

  We claim that a curve passes through $3n^\prime$ points $\left\{P_{m_i},\;1 \leq i \leq 3n^\prime\right\}$
  if and only if $\vembd(P_{m_i})$ lie in a hyperplane $H$ of $\prsp{D}$.
  First, suppose that the curve of degree $n^\prime$, given by the equation $\sum_{i, j, k: i + j + k = n^\prime}^{} c_{ijk} x_1^i x_2^j x_3^k = 0$.
  Consider, the hyperplane $H$ given by the equation
  \begin{equation*}
	H(z_1, \dotsc, z_D) = \sum_{i=1}^{D} c_{\Phi(i)} z^D.
  \end{equation*}
  It is clear that $\vembd(P_{m_i}) \in H$. On the other hand, if
  $\vembd(P_{m_i}) \in H = \sum_{i = 1}^D h_i z_i$, they lie on the
  curve $\sum_{i = 1}^D h_i x_1^{a_1^i} x_2^{a_2^i} x_3^{a_3^i}$ where
  $(a_1^i, a_2^i, a_3^i) = \Phi(i)$ as above.

  To put it in an algebraic-geometric context, let $v$ be the composition
  \begin{equation*}
	\xymatrix{
	  \ec\ \ar@{>->}[r] \ar@/_2pc/[rr]_{v} & \prpl \ar[r]^{\vembd} &
	  \prsp{D}.
	}
  \end{equation*}
  The intersection of $\ec$ with a curve of degree $n^\prime$ corresponds to the
  zeroes of a section of a degree $n^\prime$ line bundle. Any such line bundle is
  the pull-back of a degree $1$ line bundle on $\prsp{D}$ via the Veronese
  map $\vembd$. The $H$, as defined above, defines the degree $1$ divisor
  corresponding to this line bundle on $\prsp{D}$. Thus, the problem of
  finding which $3n^\prime$ points among a collection of points $\mathcal{P}$ on an
  elliptic curve lie on a degree $n^\prime$ curve reduces to finding which $3n^\prime$
  points in the image $\vembd(\mathcal{P})$ lie on a hyperplane. The latter
  is the linear algebra problem that we are interested in.
\section{The main algorithm -- reducing ECDLP to a linear algebra problem (Problem L)}
The algorithm that we present in this paper has two parts. One reduces it to a problem in linear algebra and the other solves that linear algebra problem which we call Problem L. The first algorithm, Algorithm 1, is Las Vegas in nature with high success probability. Furthermore, the algorithm is polynomial time in both time and space complexity.

\begin{algorithm}[H]
\DontPrintSemicolon
\KwData{Two points $P$ and $Q$, such that $mP=Q$}
\KwResult{$m$}
Select a positive integers, $n^\prime$ and $l=3n^\prime$. Initialize a matrix with $3n^\prime+l$ rows and ${n^\prime+2\choose 2}$ columns. Initialize a vector $\mathcal{I}$ of length $3n^\prime-1$ and another vector $\mathcal{J}$ of length $l+1$. Initialize integers $A,B=0$. \;
\Repeat{$\mathcal{K}$ has a vector $v$ with $l$ zeros (Problem L)}{
\For{$i=1$ to $3n^\prime-1$}{
\Repeat{$r$ is not in $\mathcal{I}$}{
choose a random integer $r$ in the range $[1,p)$\;
\:}
$\mathcal{I}[i]\leftarrow r$\;
compute $rP$\;
compute $\overline{rP}$\;
insert $\overline{rP}$ as the $i\textsuperscript{th}$ row of the matrix $\mathcal{M}$
\; }
\For{$i=1$ to $l+1$}{
\Repeat{$r$ is not in $\mathcal{J}$}{
choose a random integer $r$ in the range $[1,p)$\;
\:}
$\mathcal{J}[i]\leftarrow r$\;
compute $-rQ$\;
compute $\overline{-rQ}$\;
insert $\overline{-rQ}$ as the $(3n^\prime+i-1)\textsuperscript{th}$ row of the matrix $\mathcal{M}$
\; }
compute $\mathcal{K}$ as the left-kernel of $\mathcal{M}$
\;}  
\For{$i=1$ to $3n^\prime-1$}{
\If {$v[i]\neq 0$}{
$A=A+\mathcal{I}[i]$\;}
\;}
\For{$i=3n^\prime$ to $3n^\prime+l$}{
\If {$v[i]\neq 0$}{
$B=B+\mathcal{J}[i-3n^\prime+1]$\;}
\;}
\Return{$A\times B^{-1}\bmod p$}
\caption{Reducing ECDLP to a linear algebra problem (Problem L)}
\end{algorithm}
\subsubsection{Why is this algorithm better than exhaustive search}
In the exhaustive search we would have picked a random set of $3n^\prime$ points and then checked to see if the sum of those points is $Q$. In the above algorithm we are taking a set of $3n^\prime+l$ points and then checking all possible $3n^\prime$ subsets of this set simultaneously. There are ${3n^\prime+l\choose l}$ such subsets. This is one of the main advantage of our algorithm.
\subsubsection{Probability of success of the above algorithm} 
To compute the probability, we need to understand the number of unique partitions of an integer $m$ modulo a prime $p$. For our definition of partition, order of the parts does not matter.
The number of partitions is proved in the following theorem:
\begin{theorem}
Let $k$ be an integer greater than $2$. The number of $k$ unique partitions of $m$ modulo a odd prime $p$ is $\dfrac{(p-1)(p-2)\ldots(p-k+2)(p-k)}{k!}$. 
\end{theorem}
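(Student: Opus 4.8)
The plan is to recognize that a $k$ unique partition of $m$ modulo $p$ is nothing but a $k$-element subset $\{n_1,\dots,n_k\}\subseteq\{1,2,\dots,p-1\}$ whose elements are distinct and satisfy $n_1+\cdots+n_k\equiv m\pmod p$. So the whole theorem reduces to the purely combinatorial task of counting, for a fixed residue $m$, the number $N(k,m)$ of $k$-subsets of the nonzero residues with prescribed sum. I would first observe that it is cleaner to count \emph{ordered} tuples of distinct nonzero residues with sum $m$ and divide by $k!$ at the end; this is exactly the $k!$ appearing in the denominator of the claimed formula.

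The first real step is to pin down the dependence on $m$. Here I would exploit the multiplicative action of $\mathbb{F}_p^{\ast}$: for any $a\in\{1,\dots,p-1\}$, multiplication by $a$ permutes the nonzero residues and sends a subset of sum $s$ to one of sum $as$. Hence it gives a bijection between the subsets counted by $N(k,m)$ and those counted by $N(k,am)$, so $N(k,m)$ can depend only on whether $m\equiv 0$ or not. Since in the discrete logarithm setting $Q\neq\mathcal{O}$ forces $m\not\equiv 0$, the count is a single value independent of the particular nonzero $m$ — which is consistent with the fact that the claimed formula carries no $m$. I would also record here that $1\le k\le p-1$ (one cannot pick more than $p-1$ distinct nonzero residues), so $p\nmid k$; this is needed to keep the later division by $p$ legitimate.

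To obtain the exact number I would use a generating-function / roots-of-unity extraction. Writing $\omega=e^{2\pi i/p}$ and using orthogonality of additive characters,
\begin{equation*}
N(k,m)=\frac{1}{p}\sum_{t=0}^{p-1}\omega^{-tm}\,e_k\!\left(\omega^{t},\omega^{2t},\dots,\omega^{(p-1)t}\right),
\end{equation*}
where $e_k$ is the degree-$k$ elementary symmetric polynomial, since $e_k$ collects precisely the size-$k$ subsets. For $t=0$ the inner value is $\binom{p-1}{k}$; for $t\neq 0$ the numbers $\omega^{jt}$ ($j=1,\dots,p-1$) run through all nontrivial $p$-th roots of unity, so $\prod_{j=1}^{p-1}(1+x\,\omega^{jt})=\frac{1+x^{p}}{1+x}=\sum_{i=0}^{p-1}(-1)^{i}x^{i}$, whence $e_k=(-1)^k$ for every such $t$. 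Summing the remaining character sum then collapses everything to a closed form. An equivalent, character-free route is the recursion $N(k,m)+N(k-1,m)=\binom{p}{k}/p$, obtained by splitting the $k$-subsets of $\{0,1,\dots,p-1\}$ according to whether they contain $0$ and using the translation symmetry to evaluate the left-hand total; solving it from the base case $N(1,m)=1$ gives the same answer. The last step is the algebraic simplification turning this expression into the product form in the statement.

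The hard part, I expect, is honestly the distinctness constraint rather than the modular bookkeeping: it is what forces the \emph{elementary} symmetric polynomial $e_k$ (instead of a plain power sum) into the character calculation, and what produces the $(-1)^k$ boundary term once the $t\neq 0$ contributions are collected. Keeping careful track of that term — and of the single exceptional residue $m\equiv 0$, where the character sum contributes $+(p-1)$ rather than $-1$ — is where the argument must be done precisely; the reduction and the symmetry steps above are comparatively mechanical.
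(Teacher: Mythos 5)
Your setup and both of your proposed computations are sound, but the step you defer at the end --- ``the algebraic simplification turning this expression into the product form in the statement'' --- is precisely the step that fails, and it fails because the stated formula is false. Carrying out your character-sum extraction: the $t=0$ term is $\binom{p-1}{k}$, and for $t\neq 0$ you correctly get $e_k=(-1)^k$, so for $m\not\equiv 0\pmod p$,
\begin{equation*}
N(k,m)=\frac{1}{p}\Bigl[\binom{p-1}{k}+(-1)^k\sum_{t=1}^{p-1}\omega^{-tm}\Bigr]=\frac{\binom{p-1}{k}-(-1)^k}{p},
\end{equation*}
and your recursion $N(k,m)+N(k-1,m)=\binom{p}{k}/p$ with $N(1,m)=1$ yields the same value. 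This does not equal the claimed $\dfrac{(p-1)(p-2)\cdots(p-k+2)(p-k)}{k!}$. Take $p=7$, $k=3$ (so the claimed numerator is $(p-1)(p-3)$): your correct value is $(20+1)/7=3$, while the claimed formula gives $\frac{6\cdot 4}{3!}=4$. Worse, take $p=5$, $k=3$: your value is $(4+1)/5=1$, while the claimed formula gives $\frac{4\cdot 2}{3!}=\frac{4}{3}$, which is not an integer and hence cannot count anything. So no algebraic manipulation can close your final step; what your method actually produces, carried out honestly, is a disproof of the theorem as literally stated.

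For context, the paper's own argument is a heuristic box-filling count: fill $k-1$ boxes with distinct nonzero values, let the last box be forced to $m$ minus their sum, and delete a single choice from box $k-1$ to exclude the collision $n_k=n_{k-1}$. It ignores collisions of the forced value with the other $k-2$ boxes, and the possibility that the forced value is $0$; that is exactly why its product formula is inexact. The discrepancy is harmless downstream, since the theorem is only invoked to justify that a random $k$-set sums to $m$ with probability roughly $1/p$, which your exact value $\bigl(\binom{p-1}{k}-(-1)^k\bigr)/p$ supports equally well. But as a referee's point: your proposal cannot be completed into a proof of the statement; the statement it proves (and the one the paper should assert) is $N(k,m)=\bigl(\binom{p-1}{k}-(-1)^k\bigr)/p$ for $m\not\equiv 0\pmod p$.
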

\begin{proof}
The argument is a straight forward counting argument. We think of $k$ parts as $k$ boxes. Then the first box can be filled with $p-1$ choices, second with $p-2$ choices as so on. The last but one, $k-1$ box can be filled with $p-k+1$ choices. When all $k-1$ boxes are filled then there is only one choice for the last box, it is $m$ minus the sum of the other boxes. So it seems that the count is $(p-1)(p-2)\ldots(p-k+1)$ choices.

However there is a problem, the choice in the last box might not be different from the first $k-1$ choices. To remove that possibility we remove a choice from the last but one box. That choice is $m$ minus the sum of the first $k-2$ boxes divided by $2$.

Since order does not matter, we divide by $k!$. 
\end{proof}
Consider the event, $m$ is fixed, we pick $k$ integers less than $p$. What is the probability that those numbers form a partition of $m$. From the above theorem, number of favorable events is 
$\dfrac{(p-1)(p-2)\ldots(p-k+2)(p-k)}{k!}$ and the total number of events is ${p\choose k}$. Since for all practical purposes $k$ is much smaller than $p$, we approximate the probability to be $\frac{1}{p}$.

Now we look at the probability of success of our algorithm. In our algorithm we choose $3n^\prime$ points from $3n^\prime+l$ points. This can be done in ${3n^\prime+l\choose l}$ ways. Then the probability of success of the algorithm is $1-\left(1-\frac{1}{p}\right)^{{3n^\prime+l\choose l}}$.

Let us first look at the $\left(1-\frac{1}{p}\right)^p$. It is well known that $\left(1-\frac{1}{p}\right)^p$ tends to $\frac{1}{e}$ when $p$ tends to infinity. So if we can make ${3n^\prime+l\choose l}$ close to $p$,  then we can claim the asymptotic probability of our algorithm is $1-\frac{1}{e}$ which is greater than $\frac{1}{2}$.

Since we are dealing with matrices, it is probably the best that we try to keep the size of it as small as possible. Note that the binomial coefficient is the biggest when it is of the form ${2n\choose n}$ for some positive integer $n$. Furthermore, from Stirling's approximation it follows that for large enough $n$, ${2n\choose n}\approx \frac{4^n}{\sqrt{\pi n}}$.

So, when we take $3n^\prime=l$ and such that ${3n^\prime+l\choose l}=p$ then $l$ is the solution to the equation $l=O(1)+\log{l}+\log{p}$.

To understand the time complexity of this algorithm (without the linear algebra problem), the major work done is finding the kernel of a matrix. Using Gaussian elimination, there is an algorithm to compute the kernel which is cubic in time complexity. Thus we have proved the following theorem:
\begin{theorem}
When $p$ tends to infinity, the probability of success of the above algorithm is approximately $1-\frac{1}{e}\approx 0.6321$. The size of the matrix required to reach this probability is O$(\log{p})$. This makes our algorithm polynomial in both time and space complexity.
\end{theorem}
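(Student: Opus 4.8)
The plan is to combine the single-pass success probability derived above with Stirling's formula, and then read off the matrix dimensions and the cost of each pass.

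First I would recall the per-pass probability established in the previous subsection. Drawing $3n'$ points out of the $3n'+l$ sampled can be done in $\binom{3n'+l}{l}$ ways, each candidate subset has probability approximately $1/p$ of summing to the discrete-log target (Theorem~\ref{thm1} together with the partition count), and treating these as independent trials gives a per-pass success probability of $1-\bigl(1-\tfrac{1}{p}\bigr)^{\binom{3n'+l}{l}}$.

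Next, to keep the matrix as small as possible while maximizing the number of subsets tested, I would set $3n'=l$, so that the binomial becomes the central coefficient $\binom{2l}{l}$, which is the largest binomial for a fixed row count $2l$. By Stirling, $\binom{2l}{l}\approx 4^{l}/\sqrt{\pi l}$. Imposing the design equation $\binom{2l}{l}=p$ and taking logarithms gives $l\log 4 = \log p + \tfrac12\log(\pi l)$, i.e.\ $l = O(1)+\log l+\log p$; since the $\log l$ term is of lower order, this solves to $l=O(\log p)$. With this calibration, $\bigl(1-\tfrac1p\bigr)^{\binom{2l}{l}}=\bigl(1-\tfrac1p\bigr)^{p}\to e^{-1}$ as $p\to\infty$, so the per-pass probability tends to $1-\tfrac1e\approx 0.6321$. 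For the size claim I would argue directly: the matrix has $2l=O(\log p)$ rows (equivalently, we sample $O(\log p)$ points) and $\binom{n'+2}{2}=\tfrac{(n'+1)(n'+2)}{2}$ columns with $n'=l/3=O(\log p)$, hence $O((\log p)^2)$ columns, so the whole matrix is of polynomial size. Each row is produced by one scalar multiplication $rP$ (double-and-add, polynomial in $\log p$) followed by evaluating the $\binom{n'+2}{2}$ degree-$n'$ monomials, and $\mathcal{K}$ is obtained by Gaussian elimination, whose cost is cubic in the matrix dimensions. Every one of these steps is polynomial in $\log p$, so each pass runs in polynomial time and space.

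The step I expect to be the main obstacle is making the limit $1-1/e$ genuinely rigorous rather than heuristic. Two issues must be addressed. First, $\binom{2l}{l}$ is an integer that jumps by roughly a factor of $4$ between consecutive values of $l$, so in general no integer $l$ makes $\binom{2l}{l}$ equal to, or even asymptotic to, $p$; and if $\binom{2l}{l}=c\,p$ with $c\not\to1$ the limit becomes $e^{-c}\neq e^{-1}$. I would resolve this by not forcing $3n'=l$: keeping both free parameters and choosing $N=3n'+l$ with $l$ near $N/2$, the successive ratios $\binom{N}{l+1}/\binom{N}{l}=(N-l)/(l+1)$ tend to $1$ near the center, so $\binom{N}{l}/p$ can be driven to $1$ while both $n'$ and $l$ remain $O(\log p)$, preserving all the size and complexity bounds. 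Second, the $\binom{3n'+l}{l}$ subsets are \emph{not} independent, since overlapping subsets share sampled points, so the product form is itself an approximation; a fully rigorous statement would require controlling this dependence, which is why the theorem is phrased with ``approximately.''
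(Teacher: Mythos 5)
Your proposal follows essentially the same route as the paper's own argument: the per-pass probability $1-\left(1-\frac{1}{p}\right)^{\binom{3n'+l}{l}}$, the calibration $3n'=l$ with $\binom{2l}{l}\approx p$ via Stirling's formula yielding $l=O(1)+\log l+\log p$ and hence $l=O(\log p)$, and cubic-cost Gaussian elimination giving polynomial time and space per pass. The two rigor issues you flag --- that no integer $l$ makes $\binom{2l}{l}$ genuinely asymptotic to $p$ (so the constant could drift to $1-e^{-c}$ with $c\neq 1$), and that the $\binom{3n'+l}{l}$ subset events are not independent --- are real gaps in the paper's own reasoning, which it passes over silently; your proposed fixes (decoupling $3n'$ from $l$ to tune $\binom{N}{l}/p\to 1$, and acknowledging the dependence as the source of ``approximately'') strengthen rather than diverge from the paper's approach.
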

\subsection{Few comments}
\subsubsection{Accidentally solving the discrete logarithm problem} It might happen, that while computing $rP$ and $rQ$ in our algorithm, it turns out that for some $r_1$ and $r_2$, $r_1P=r_2Q$. In that case, we have solved the discrete logarithm problem. We should check for such accidents. However, in a real life situation, the possibility of an accident is virtually zero, so we ignored that in our algorithm completely.
\subsubsection{On the number of $P$s and $Q$s in our algorithm} 
The algorithm will take as input $P$ and $Q$ and produce different $P$s and $Q$s and the produce a vector $v$ with $l$ many zeros. If all of these $l$ zeros fall either in the place of $P$s or $Q$s exclusively, then we have not solved the discrete logarithm problem. To avoid this, we have chosen $P$s and $Q$s of roughly same size, with one more $P$ than $Q$. This way the vector $v$ will have atleast one non-zero in the place of both $P$ and $Q$.  
\subsubsection{Allowing, detecting and using multiple intersection points in our algorithm}
One obvious idea to make our algorithm slightly faster: allow
multiplicities of intersection between the curve $C$ and the elliptic
curve $\mathcal{E}$. This will increase the computational complexity. Since the
elliptic curve is smooth at the points one is interested in, one
observes that with high probability the multiplicity of intersection
will coincide with the multiplicity of the point in $C$. This reduces to
checking if various partial derivatives are zero. This can easily be
done by introducing extra rows in the matrix $\mathcal{M}$. Then the
algorithm reduces to finding vectors with zeroes in a particular
pattern. This is same as asking for special type of solutions in
Problem L. However, this has to be implemented efficiently as probability of such
an event occurring is around $1/p$ for large primes $p$.
\section{Dealing with the linear algebra problem}
This paper provides an efficient algorithm to reduce the elliptic curve discrete logarithm problem to a problem in linear algebra. We call it the Problem L.

At this stage we draw the attention of the reader to some similarities that emerge between the most powerful attack on the discrete logarithm problem over finite fields, the index-calculus algorithm, and our algorithm. In an index-calculus algorithm, the discrete logarithm problem is reduced to a linear algebra problem. Similar is the case with our algorithm. However, in our case, the linear algebra problem is of a different genre and not much is known about this problem. In this paper, we have not been able to solve the linear algebra problem completely. However, we made some progress and we report on that in this section. 
\begin{problem}
Let $W$ be a $l$-dimensional subspace of a $n$-dimensional vector space $V$. The vectors in the vectors space are presented as linear sum of some fixed basis of $V$. The problem is to determine, if $W$ contains a vector with $l$ zeros. If there is one such vector, find that vector.
\end{problem}
This problem is connected with the earlier algorithm in a very straightforward way. We need to determine if the left-kernel of the matrix $\mathcal{M}$ contains a vector with $l$ zeros and that is where Problem L must be solved efficiently for the overall algorithm to run efficiently. As is customary, we would assume that the kernel $\mathcal{K}$ is presented as a matrix of size $l\times (3n^\prime+l)$, where each row is an element of the basis of $\mathcal{K}$.

A algorithm that we developed, uses Gaussian elimination algorithm multiple times to solve Problem L. In particular we use the row operations from the Gaussian elimination algorithm. Abusing our notations slightly, we denote  the basis matrix of $\mathcal{K}$ by $\mathcal{K}$ as well. Now we can think of $\mathcal{K}$ to be made up of two blocks of $l\times l$ matrix. Our idea is to do Gaussian elimination to reduce each of these blocks to a diagonal matrix one after the other. The reason that we do that is, when the first block has been reduced to diagonal, every row of the matrix has at least $l-1$ zeros. So we are looking for another zero in some row. The row reduction that produced the diagonal matrix in the first block might also have produced that extra zero and we are done. However, if this is not the case, we go on to diagonalize the second block and check for that extra zero like we did for the first block.

\begin{algorithm}[H]
\DontPrintSemicolon
\KwData{The basis matrix $\mathcal{K}$}
\KwResult{Determine if Problem L is solved. If yes, output the vector that solves Problem L}
\For{i=1 to 2}{
row reduce block $i$ to a lower triangular block\;
check all rows of the new matrix to check if any one has $l$ zeros\;
\If{there is a row with $l$ zeros}{
STOP and return the row\;}
row reduce the lower-triangular block to a diagonal block\;
check all rows of the new matrix to check if any one has $l$ zeros\;
}
\caption{Multiple Gaussian elimination algorithm}
STOP (Problem L not solved)
\end{algorithm}
\section{Complexity, implementation and conclusion}
\subsection{Complexity} We describe the complexity of the whole algorithm in this section. First note that the whole algorithm is the composition of two algorithms, one is Algorithm 1, which has success probability $0.6$ and the other is the linear algebra problem. It is easy to see from conditional probability that the probability of success of the whole algorithm is the product of the probability of success of Algorithm 1 and Algorithm 2.

Let us now calculate the probability of Algorithm 2 under the condition that Algorithm 1 is successful. In other words, we know that Algorithm 1 has found a $\mathcal{K}$ whose span contains a vector with $l$ zeros. What is the probability that Algorithm 2 will find it?

Notice that Algorithm 2 can only find zero if they are in certain positions and the number of such positions is $l^2$. Total number of ways that there can be $l$ zeros in a vector of size $3n^\prime +l$ is ${3n^\prime+l\choose l}$. In our setting we have already assumed that ${3n^\prime+l\choose l}\approx p$. Then the probability of success of the whole algorithm is
\[0.6\times\dfrac{\left(\log{p}\right)^2}{p}. \]
Which is a significant improvement over exhaustive search!

One thing to notice, the probability of success is $1-\left(1-\frac{1}{p}\right)^{{3n^\prime+l\choose l}}$ and in the probability estimate we have ${3n^\prime+l\choose l}$ in the denominator. Furthermore, one observes that in this paper we have taken ${3n^\prime+l\choose l}$ to approximately equal the prime $p$. One can now question our choice and argue, if we took ${3n^\prime+l\choose l}$ to be much smaller than $p$, we might get a better algorithm. Alas, this is not the case, $1-\left(1-\frac{1}{p}\right)^{p^\frac{1}{n}}$ tends to $0$ as $p$ tends to infinity for $n\geq 2$. 
\subsection{Implementation}
We have implemented the algorithm in sage~\cite{sagemath}. Since the complexity of the algorithm is only little better than exhaustive search there is no point in providing details of implementation. However, we would like to mention that the algorithm works flawlessly with elliptic curves on fields of all characteristics.
\subsection{Conclusion}
We conclude this paper by saying that we have found a new genre of attack against the elliptic curve discrete logarithm problem. This attack has some similarities with the well-known index-calculus algorithm. In an index-calculus algorithm, the discrete logarithm problem is reduced to a problem in linear algebra and then the linear algebra problem is solved. However, the similarities are only skin deep as our linear algebra problem in completely new.
\renewcommand{\baselinestretch}{1}
\bibliography{paper}
\Address
\end{document}